\definecolor{defblue}{rgb}{0.08235294118,0.3098039216,0.537254902}
\let\emph\relax
\DeclareTextFontCommand{\emph}{\color{defblue}\em}
\newcommand{\etal}{{et~al.}}
\renewcommand{\orcidID}[1]{\href{https://orcid.org/#1}{\includegraphics[scale=.03]{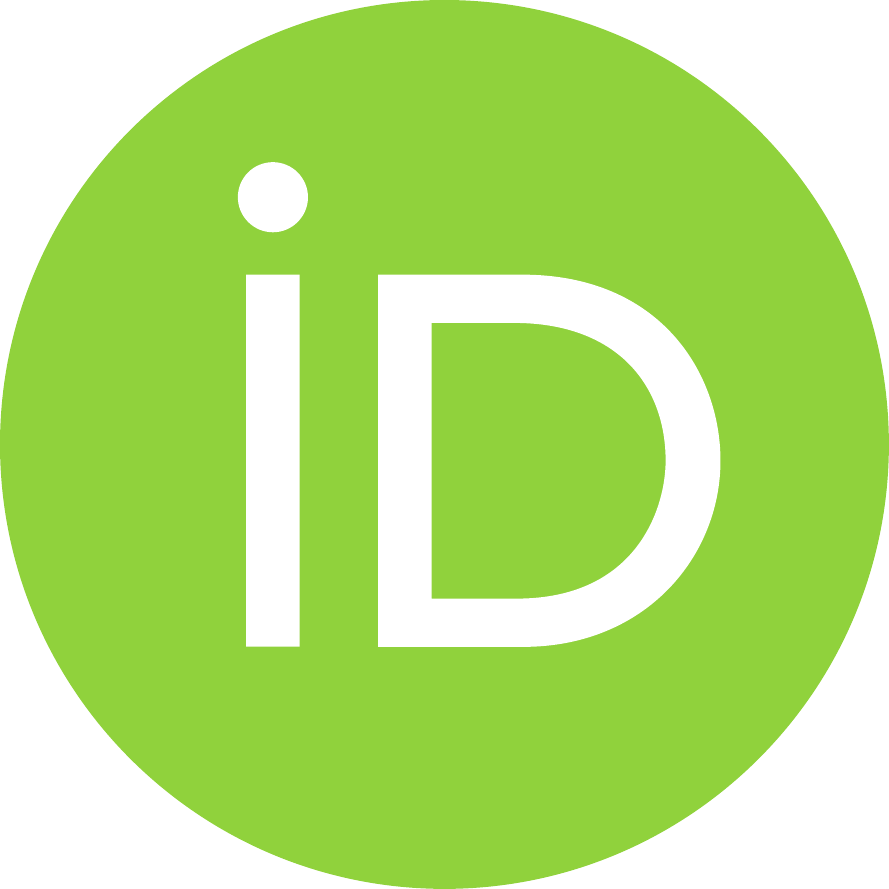}}}
\newcommand{\doii}[1]{\href{https://dx.doi.org/#1}{\texttt{doi:#1}}}
\DeclarePairedDelimiter\set{\{}{\}}
\DeclarePairedDelimiter\abs{\lvert}{\rvert}
\def\to{\ensuremath{\rightarrow}}
\def\Oh{\ensuremath{\mathcal{O}}}
\def\N{\ensuremath{\mathbb{N}}}
\begin{document}
\title{Drawing Tree-Based Phylogenetic Networks with Minimum Number of Crossings}
\titlerunning{Crossing Minimization for Drawings of Tree-Based Networks}
\author{Jonathan Klawitter\inst{1}\orcidID{0000-0001-8917-5269} \and\\
Peter Stumpf\inst{2}\orcidID{0000-0003-0531-9769}}
\authorrunning{Klawitter and Stumpf}
\institute{University of Würzburg, Germany
\and
	University of Passau, Germany
}
\maketitle            

\pdfbookmark[1]{Abstract}{Abstract} 
\begin{abstract}
In phylogenetics, tree-based networks are used to model and visualize 
the evolutionary history of species 
where reticulate events such as horizontal gene transfer have occurred.
Formally, a tree-based network $N$ consists of a phylogenetic tree $T$ 
(a rooted, binary, leaf-labeled tree)
and so-called reticulation edges that span between edges of $T$.
The network $N$ is typically visualized by drawing $T$ downward and planar
and reticulation edges with one of several different styles.  
One aesthetic criteria is to minimize the number of crossings between
tree edges and reticulation edges. 
This optimization problem has not yet been researched.  
We show that, if reticulation edges are drawn x-monotone,
the problem is NP-complete, but fixed-parameter tractable in the number of reticulation edges. 
If, on the other hand, reticulation edges are drawn like ``ears'',
the crossing minimization problem can be solved in quadratic time.
\keywords{Phylogenetic Network \and Tree-Based \and Crossing Minimization}
\end{abstract}

\section{Introduction} \label{sec:introduction}
The evolution of a set of species 
is usually depicted by a \emph{phylogenetic tree}~\cite{SS03}.
More precisely, a \emph{phylogenetic tree} $T$ is a rooted, binary tree 
where the leaves are labeled bijectively by the set of species. 
The internal vertices of $T$, each having two children, represent bifurcation events 
in the evolution of the taxa.
The heights assigned to vertices indicate the flow of time from the root, 
lying furthest in the past, to the present-day species.

Evolutionary histories can however not always be fully represented by a tree~\cite{Doo99}.
Indeed, reticulate events such as hybridization, horizontal gene transfer, recombination, and reassortment
require the use of vertices with higher indegree~\cite{HRS10,Ste16}.
A \emph{phylogenetic network} $N$ generalizes a phylogenetic tree in exactly this sense, that is,
besides the root, leaves and vertices with indegree one and outdegree two, 
$N$ may contain vertices with indegree two and outdegree one. 

\paragraph{Tree-Based Networks.}
Motivated by the question of whether the evolutionary history of the taxa is fundamentally tree-like,
Francis and Steel~\cite{FS15} introduced a class of phylogenetic networks called \emph{tree-based networks}, 
which are ``merely phylogenetic trees with additional edges''.
Formally, a \emph{tree-based network} $N$ is a phylogenetic network that has a subdivision $T'$ of a phylogenetic tree $T$ as spanning tree.
Then $T$ is called the \emph{base tree} of $N$ and $T'$ the \emph{support tree} of $N$.
Lately, tree-based networks have received a lot of attention in combinatorial phylogenetics~\cite{FS15,AAA16,JvI18,PSS19}
and while drawings of several other types of phylogenetics networks have been investigated in the past~\cite{Hus09,HRS10,TK16,CDMP20},
this has, to the best of our knowledge, not been done for tree-based networks.
In this paper, we look at drawings of tree-based networks with different drawing styles 
inspired by drawings in the literature.

For a tree-based network $N$, we assume that both the base tree $T$
and the support tree $T'$ as spanning tree of $N$ are fixed.
We call an edge not contained in the embedding of $T'$ into $N$ a \emph{reticulation edge}.
Therefore, we can perceive a drawing of $N$ as a drawing of $T$ (or $T'$) and the reticulation edges.
A vertex of $N$ that is also in $T$ is called a \emph{tree vertex}.

\paragraph{Drawing styles.}
Our drawing conventions are that $N$ is drawn downwards with vertices at their
fixed associated height and $T$ is drawn planar in the style of a dendrogram,
that is, each tree edge $(u, v)$ consists of a horizontal line segment starting at $u$ and a vertical line segment ending at $v$.
For reticulation edges, we have different drawing styles; see \cref{fig:drawingStyles}.
In the \emph{horizontal style} -- the only style 
where the two endpoints of a reticulation edge must have the same height -- 
reticulation edges are drawn as horizontal line segments. This style has for example been used by Kumar \etal~\cite[Figure 4]{Bears}.
We assume that all horizontal edges come with slightly different heights.  
The next two styles are inspired by Figures 3 and~6 by Vaughan \etal~\cite{Bacteria}. 
There, a reticulation edge $(u, v)$ is drawn with two horizontal and one vertical line segment and thus with two bends.
The styles differ in where the vertical line segment is placed. 
We define vertex $\ell(u,v)$ as follows.
If the lowest common ancestor (lca) $w$ in $T'$ of $u$ and $v$ is a tree vertex,
set $\ell(u,v) = w$.  Otherwise, set $\ell(u,v)$ to be the first
tree vertex below~$w$. 
In the \emph{ear style}, the vertical line segment is placed to the right of
the subtree rooted at $\ell(u, v)$.
In the \emph{snake style}, the vertical line segment lies between $u$ and $v$ and, in particular, 
its x-coordinate lies between the x-coordinates of the left and right subtree of $\ell(u,v)$. 

\begin{figure}
  \centering 
  \includegraphics{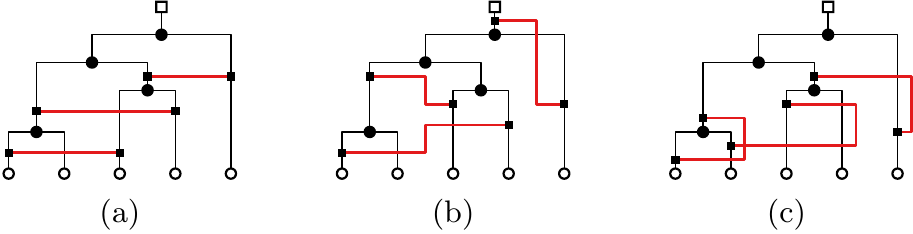}
  \caption{Drawings of tree-based networks with the (a) horizontal, (b) snake,
  and (c) ear style for the red reticulation edges.}
  \label{fig:drawingStyles}
\end{figure}

The aesthetic criteria to optimize for when constructing a drawing of $N$, with either of the styles,
is the number of crossings. Our focus is on crossings between reticulation edges and tree edges. 
Crossings between pairs of reticulation edges may be minimized in a post-processing step.

We make the following important observation.
The number of crossings in a drawing of $N$ is fully determined by the order of the leaves
or, equivalently, by the rotation of each tree vertex. 
Formally,  we use a map $c\colon V(T) \to V(T)$ that assigns to
each non-leaf vertex $v$ of $T$ one of its children.  
In a drawing of $N$, we then consider $v$ to be \emph{rotated left}, if $c(v)$ is its left child,
and \emph{rotated right}, if $c(v)$ is its right child. 
Two vertices are \emph{rotated the same way} if they are both rotated left or if they are both
rotated right. Let $\bar{c}(v)$ denote the child of $v$ that is not~$c(v)$.

\paragraph{Contribution and outline.}
First, we show that the number of crossings can be minimized in quadratic time for ear-style drawings.
Second, we prove that the problem is NP-hard for the horizontal style.
On the positive side, we devise fixed-parameter tractable (fpt) algorithms for the horizontal and the snake style.

\section{Ear-Style Drawings: Polynomial-Time Algorithm} 
Consider an ear-style drawing of a tree-based network $N$.
Let $e = (u, v)$ be a reticulation edge of $N$ and $f = (x, y)$ a tree edge of $N$.
First, note that the vertical line segment of $e$ is placed such that it does not cross any tree edge.
Next, note that if the subtree $T(\ell(u,v))$ rooted at $\ell(u,v)$ does not contain $f$, then
$e$ and $f$ cannot cross. 
Let $l$ be the horizontal line segment of $e$ starting at~$v$. 
Assume $T(\ell(u,v))$ contains $f$ and the y-coordinate range of $f$ contains the y-coordinate of $v$.
Observe that $l$ and $f$ cross if and only if $f$ is in the right subtree of $\ell(v,y)$; see \cref{fig:ears} (a).
(An analogous condition holds for the horizontal line segment starting at $u$.)
Rotating $\ell(u, v)$ thus changes whether $f$ and $l$ cross. Furthermore, in general, 
the existence of each possible crossing depends on the rotation of a single tree vertex.  
We can thus minimize the number of crossings in an ear-style drawing of $N$ by deciding
for each tree vertex which orientation results in less crossings. 
We show that this can be done efficiently.
  
\begin{theorem}\label{clm:ear}
Let $N$ be a tree-based network with $n$ leaves and $k$ reticulation edges.
Then an ear-style drawing of $N$ with minimum number of crossings can be computed in $\Oh(nk)$ time.
\end{theorem}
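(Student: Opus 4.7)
My plan is to exploit the independence noted just before the theorem: the existence of every potential crossing between a horizontal segment of a reticulation edge and a tree edge is a function of the rotation of exactly one tree vertex. Hence the crossing count decomposes as a sum of per-vertex contributions, and the optimal rotation at each tree vertex can be chosen independently.

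Concretely, I would first preprocess $T$ in $\Oh(n)$ time so that lowest common ancestor queries, ancestor--descendant tests, and height lookups each run in $\Oh(1)$ time. I then initialize two counters $c_L(w), c_R(w)$ for every non-leaf tree vertex $w$, corresponding to its two rotations. For each of the $2k$ horizontal segments $l$ of reticulation edges---where $l$ starts at an endpoint $v$ of a reticulation edge $(u,v)$---I iterate over all $\Oh(n)$ tree edges $f = (x,y)$ of $T$. Using the preprocessed queries I check in $\Oh(1)$ time whether $f \in T(\ell(u,v))$ and whether the height of $v$ lies inside the y-range of $f$. If both hold, the observation preceding the theorem tells me that $l$ crosses $f$ iff the tree vertex $w := \ell(v,y)$ is rotated so that the subtree of $w$ containing $y$ is on the right, and I increment the counter of $w$ corresponding to that rotation. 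This processes $\Oh(nk)$ pairs in $\Oh(1)$ each, for a total of $\Oh(nk)$ time.

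Finally, I fix the rotation of each tree vertex $w$ to whichever option minimizes its counter; the resulting drawing realizes $\sum_{w} \min(c_L(w), c_R(w))$ crossings, which is optimal because the choices at distinct tree vertices are independent. I do not anticipate a real obstacle beyond careful bookkeeping: symmetrically handling both horizontal segments of every reticulation edge (using $\ell(v,y)$ respectively $\ell(u,y)$), and correctly accounting for the distinction between tree vertices and subdivision vertices hidden in the definition of $\ell(\cdot,\cdot)$. If one is unsatisfied with a flat pair-wise scan, the same counters can be populated by walking from each endpoint $v$ up to $\ell(u,v)$ and, at each ancestor $w$, counting only the edges of the sibling subtree $T(\bar{c}(w))$ that span the height of $v$; but the naive scan already matches the claimed $\Oh(nk)$ bound.
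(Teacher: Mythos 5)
Your proposal is correct and essentially matches the paper's argument: both rest on the observation that each potential crossing is governed by the rotation of the single vertex $\ell(v,y)$, maintain two counters per tree vertex, and then pick the cheaper rotation independently at every vertex. The only difference is how the counters are filled --- you scan all $\Oh(nk)$ segment/tree-edge pairs with $\Oh(1)$ LCA and ancestor tests, whereas the paper sweeps upward and adds sibling-subtree widths along the path from each endpoint to $\ell(u,v)$ --- and both variants (you even mention the latter) give the same $\Oh(nk)$ bound.
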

\begin{proof}
The idea of the algorithm is to sweep upwards through $N$ and, whenever an endpoint $v$ of a reticulation edge is met, 
to tell $v$'s ancestor tree vertices how many crossings it costs to have $v$ in the left subtree.
Each tree vertex is thus equipped with with two counters that inform about which rotation is less favorable; see \cref{fig:ears} (a).

\begin{figure}[htb]
  \centering
  \includegraphics{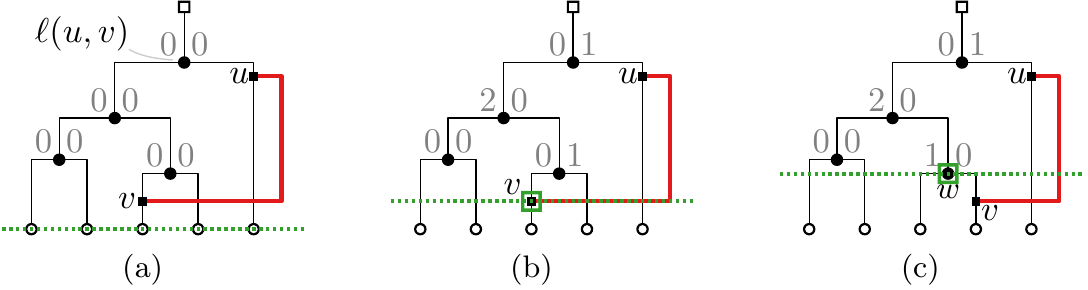}
  \caption{(a) Start of sweep line algorithm with counters at 0; 
  (b) adding potential crossings to counters; 
  (c) rotating $v$ based on counters.} 
  \label{fig:ears}
\end{figure}

Let $e = (u, v)$ be a reticulation edge.
Above we observed that a horizontal segment of $e$ can only have crossings with tree edges below $\ell(u, v)$.  
Therefore, we first compute and store the lca for each pair of endpoints of each reticulation edge in $\Oh(n + k)$ time 
with an algorithm by Gabow and Tarjan~\cite[Section 4.6]{GT85}.
We then start the sweep from the leaves towards the root of $N$. 
At every endpoint $v$ of a reticulation edge $(u, v)$ (or $(v, u)$), 
determine in $\Oh(n)$ time for every vertex~$u$ of $T$ 
the width of its left and right subtree at the height of $v$;
for example with a post-order traversal of $T$.
Then from $v$ up to $\ell(u, v)$,
add for each tree vertex $w$ the width of the subtree not containing $v$ to the respective counter;
see \cref{fig:ears}~(b).
This way, we count potential crossings of the horizontal segment at $v$ with the vertical segments of all edges at the height of $v$ in this subtree at once.
When the sweep reaches a tree vertex $w$, as in \cref{fig:ears}~(c), pick the best rotation for $w$ based on its counters.
In total we have $2k$ steps for endpoints of reticulation edges taking $\Oh(n)$ time and
$\Oh(n)$ steps for tree vertices taking $\Oh(1)$ time. Hence, the algorithm runs in $\Oh(nk)$ time.
\end{proof}


To minimize crossings between pairs of reticulation edges in a post-processing step,
we only have to consider pairs of reticulation edges that have the vertical segment to the right of the same subtree and that are nested, 
that is, two reticulation edges $(u, v)$ and $(x, y)$ with $u$ above $x$ and $y$ above $v$.
The vertical segment of $(u, v)$ should then be to the right of the vertical segment of $(x, y)$.

\section{Horizontal-Style Drawings: NP-Completeness} 
In this section, we show that the crossing minimization decision problem 
for horizontal-style drawings is NP-complete.
We prove the NP-hardness with a reduction from MAX-CUT, which is known to be
NP-complete~\cite{GJ79}.
Recall that in an instance of MAX-CUT we are given a graph $G = (V, E)$ 
and a parame\-ter~$p \in \N$, and have to decide whether there exists a 
bipartition $(A, B)$ of $V$ with at least $p$
edges with one end in $A$ and one end in $B$.

\begin{theorem}
The crossing minimization problem for horizontal-style drawings of a tree-based network is NP-complete.
\end{theorem}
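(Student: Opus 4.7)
The plan is to reduce from MAX-CUT. Membership in NP is immediate: a proposed rotation map $c$ yields a unique drawing of $N$, and crossings can be counted in polynomial time by iterating over the pairs (reticulation edge, tree edge).

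For NP-hardness, given an instance $(G = (V, E), p)$ of MAX-CUT with $|V| = n$ and $|E| = m$, I would build a tree-based network $N$ together with a crossing threshold $p'$ such that $N$ admits a horizontal-style drawing with at most $p'$ crossings if and only if $G$ has a cut of size at least $p$. The key leverage is that each tree vertex's rotation is a binary choice, so I introduce one \emph{variable tree vertex} $x_v$ per $v \in V$, whose two children root gadget subtrees that are swapped when $x_v$ is rotated; the rotation of $x_v$ then encodes the side of the bipartition $(A,B)$ to which $v$ belongs.

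I would arrange the $x_v$ along a spine and, for each edge $uv \in E$, add a single reticulation edge $e_{uv}$ at its own unique height with one endpoint anchored at a designated leaf (or subdivision point) inside the subtree of $x_u$ and the other inside the subtree of $x_v$. The crucial gadget property is that the horizontal segment of $e_{uv}$ sweeps across a strictly larger number of vertical tree segments when $x_u$ and $x_v$ are rotated the same way than when they are rotated oppositely, with a constant excess per same-side pair. Since the reticulation heights are pairwise distinct, the contributions of different edge gadgets are additive, so the total number of crossings will equal a fixed constant $c^\star$ plus $m - \mathrm{cut}(A,B)$, where $(A,B)$ is the partition induced by the rotations of the $x_v$. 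Setting $p' = c^\star + m - p$ then gives the desired equivalence, and the construction is clearly polynomial.

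The main obstacle will be designing the variable gadget so that (i) the optimal drawing really picks the intended rotations of the $x_v$ rather than some unintended ``protest'' rotation at a non-variable tree vertex, and (ii) the gap in crossings caused by swapping $x_u$ versus $x_v$ is a constant independent of the rotations of all other variable vertices. For (i), I would make every non-variable tree vertex internally rigid by attaching a bundle of cheap local reticulation edges whose crossings dominate any potential saving elsewhere, forcing a canonical rotation. For (ii), the height $h_{uv}$ of each $e_{uv}$ must be chosen so that the slice of $T$ active at that height, between the x-coordinates of the two endpoints, consists only of predictable fragments whose total width shifts by a controlled amount under rotation of $x_u$ or $x_v$. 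Verifying that this bookkeeping can be realised simultaneously for all $m$ edge gadgets, without interference between them or with the forcing gadgets, is the technically delicate part of the argument.
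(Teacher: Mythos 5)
There is a genuine gap, and it sits exactly at the ``key leverage'' of your construction. You anchor the single reticulation edge $e_{uv}$ with one endpoint inside the subtree of $x_u$ and the other inside the (disjoint) subtree of $x_v$, and you want its number of crossings to be larger by a fixed constant when $x_u$ and $x_v$ are rotated the same way. This property cannot be realized. The horizontal segment of $e_{uv}$ crosses exactly the vertical tree segments at its height whose x-coordinates lie between its two endpoints. Those segments split into three groups: segments inside $x_u$'s subtree (whether they are crossed depends only on the rotation $r_u$ and on forced rotations), segments inside $x_v$'s subtree (depending only on $r_v$), and segments strictly between the two subtrees (depending on neither, once the spine is forced). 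Hence the crossing count has the separable form $a(r_u)+b(r_v)+\mathrm{const}$. But no separable function can satisfy ``same rotation costs $c>0$ more than opposite rotation'': that would require $a(L)+b(L)=a(R)+b(R)=s$ and $a(L)+b(R)=a(R)+b(L)=d$ with $s>d$, while summing each pair of equations gives $a(L)+a(R)+b(L)+b(R)$ equal to both $2s$ and $2d$, a contradiction. Consequently the total crossing number of your instance would be of the form $\sum_{v} f_v(r_v)$ and could be minimized by choosing each rotation independently, so the reduction would not encode MAX-CUT at all (it would in fact be solvable in polynomial time).

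The paper circumvents precisely this obstruction by making the two rotation-sensitive vertices of an edge gadget \emph{nested} rather than siblings of disjoint subtrees: for each graph edge $e=\{u,v\}$ it uses a tree vertex $u_e$ that is a child of a tree vertex $v_e$, and a reticulation edge from the edge $u_ec(u_e)$ to the edge $v_e\bar{c}(v_e)$; because the crossed segment $u_e\bar{c}(u_e)$ lies inside $v_e$'s subtree, its relative position depends on both rotations, and the crossing occurs exactly when $u_e$ and $v_e$ are rotated the same way. The price is that each graph vertex $v$ is now represented by $\deg(v)$ tree vertices, so a separate vertex gadget (a small tree with a large bundle of $2(\abs{V}+1)\cdot\abs{E}$ reticulation edges plus two reticulation edges per copy) is needed to force all copies of $v$ to rotate consistently -- this is the analogue of your forcing idea in (i), which is sound, but it cannot substitute for a working edge gadget. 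Your concern (ii) about interference between gadgets is actually the lesser issue (with distinct heights and a forced spine, intermediate subtrees contribute a rotation-invariant number of crossings); the missing ingredient is a gadget whose cost genuinely depends on the \emph{relative} rotation of two vertices, which requires nesting (or an equivalent interaction) as in the paper.
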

\begin{proof}
Firstly, since we can non-deterministically generate all the drawings of $N$
and count the number of crossings of a drawing in polynomial
time, the problem is in NP. Concerning the hardness, we polynomial-time
reduce a MAX-CUT instance with a graph $G = (V, E)$ to crossing minimization on a
tree-based network $N$. In the following construction of $N$, assume
that leaves are always (re)assigned the height $0$.

The main idea is to have one \emph{edge gadget} $N_e$  for each $e \in E$ that induces
a crossing if and only if $e$ is not in our cut; see \cref{fig:gadgets}. 
Let $h\colon V \to \mathbb{N}$ be an arbitrary vertex ordering.
Let $e = \set{u, v} \in E$ and suppose $h(u) < h(v)$. 
The construction of $N_e$ then works as follows.
We have a tree vertex $u_e$ with two leaves as children and a tree vertex $v_e$
with $u_e$ and a leaf as children. We set~$c(v_e) = u_e$
and the heights of $u_e$ and $v_e$ to $h(u)$ and $h(v)$ respectively.
We add a reticulation edge $f_e$ between $u_ec(u_e)$ and $v_e\bar{c}(v_e)$.
Note that $f_e$ and $u_e\bar{c}(u_e)$ cross if and only if $u_e$ and $v_e$ are
rotated the same way.
To connect all edge gadgets, we replace the leaves of an arbitrary rooted, binary tree
with $\abs{E}$ leaves and a downward planar embedding with the edge gadgets; 
see \cref{fig:reduction}.

\begin{figure}[htb]
  \centering 
  \includegraphics{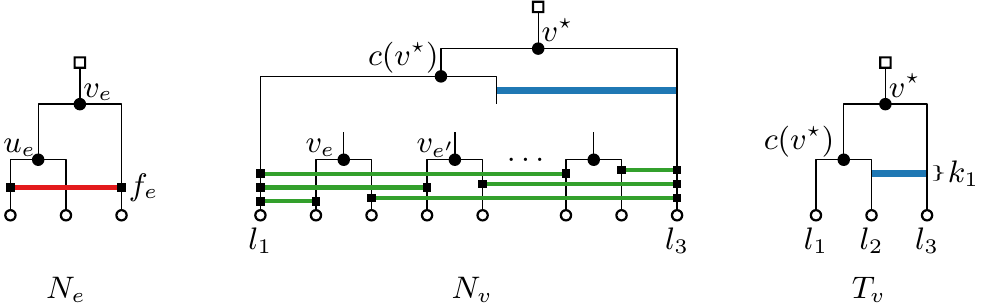}
  \caption{An edge gadget $N_e$; a vertex gadget $N_v$ based on the tree $T_v$.}
  \label{fig:gadgets}
\end{figure}

We want to ensure that the tree vertices $v_1,\dots,v_{\deg(v)}$ corresponding
to the same node $v\in V$ are all rotated the same way. 
If this is enforced, we can consider all nodes in $V$ where the corresponding tree vertices
are rotated left as one partition set and all nodes in $V$ where the
corresponding tree vertices are rotated right as the other partition set.
If on the other hand a cut is given, we simply choose for each vertex the
rotation of the corresponding tree vertices accordingly.
Now, to ensure the same rotation for all corresponding tree vertices, we construct a
\emph{vertex gadget} $N_v$ for each node $v \in V$ (in some order); see \cref{fig:gadgets}.
We start with a rooted, binary tree $T_v$ on three leaves
$l_1, l_2, l_3$ such that $l_1$ and $l_2$ have a common parent.
Let $v^\star$ denote the child of the root of $T_v$ and let $c(c(v^\star)) = l_1$.
Add a bundle of $k_1 = 2 (\abs{V} + 1) \cdot \abs{E}$ reticulation edges between $l_2$ and $l_3$.
We will see that $k_1$ is large enough such that this bundle does not induce crossings in a
crossing minimum drawing. It thus enforces that $l_2$ lies between $l_1$ and $l_3$. 
We substitute $l_2$ by our current construction; see \cref{fig:gadgets}.

Lastly, for $1 \le i \le \deg(v)$, we add a 
reticulation edge between $v_ic(v_i)$ and the incoming edge of $l_1$, 
and a reticulation edge between $v_i\bar{c}(v_i)$ and $l_3$. 
Note that if $v^\star$ and $v_i$ are rotated the same way, we
get two crossings less than otherwise.
However, different rotations can save at most one crossing in the edge gadget containing $v_i$.
Hence, in a crossing minimum drawing, $v^\star$ and $v_i$ are rotated the same
way. In fact, $v_1,\dots,v_{\deg(v)}, v^\star$ are rotated the same way.
This completes the construction of $N$. Note that $N$ has a size polynomial 
in the size of $G$.

Note that the order of the edge gadgets does not
influence the number of crossings with the two reticulation edges added for $v_i$; 
this number is fixed for crossing minimum drawings. 
Therefore, we can compute the total number $k_2$ of crossings induced by vertex gadgets.
Furthermore, since $k_2 \le 2\abs{V} \ \abs{E}$ and thus~$k_1 \ge k_2 + \abs{E} + 1$, 
we get that crossing one edge bundle would induce more crossings than we obtain from the vertex gadgets 
and from the edge gadgets. Hence, no bundle induces crossings in a crossing minimum drawing. 

\begin{figure}[htb]
  \centering 
  \includegraphics{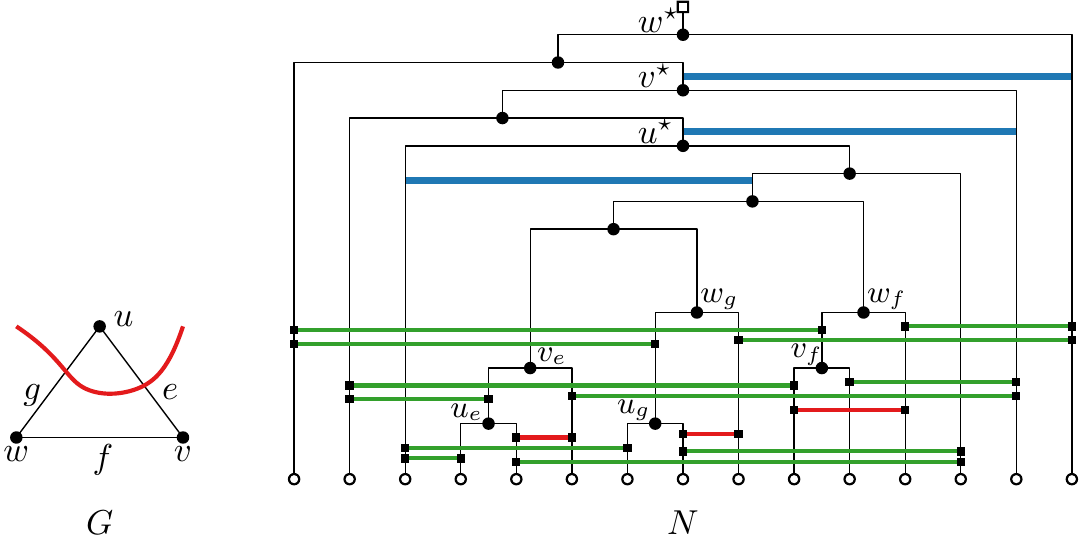} 
  \caption{A crossing-minimum drawing of $N$ inducing a max-cut on $G$.}
  \label{fig:reduction}
\end{figure}

We conclude that minimizing crossings boils down to minimizing crossings in edge gadgets.  
Finally, by the construction of $N$ and our observations, we get that
$N$ admits a horizontal-style drawing with $k \le k_2 + \abs{E} - p$ crossings 
if and only if $G$ admits a cut of size at least $p$. The statement follows.
\end{proof}

A snake-style drawing where endpoints of reticulation edges
have the same height is a horizontal-style drawing; the reduction thus also works for this style.

\begin{corollary} 
The crossing minimization problem for snake-style drawings of a tree-based network is NP-complete.
\end{corollary}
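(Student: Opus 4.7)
The plan is to reuse the reduction from the previous theorem essentially verbatim. Membership in NP follows by the same argument as before: nondeterministically guess a rotation for each tree vertex together with a height for each reticulation-edge endpoint, construct the resulting drawing, and count its crossings in polynomial time.

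For hardness, I would rely on the following observation: if the two endpoints of a snake-style reticulation edge happen to lie at the same y-coordinate, then its two horizontal segments merge and its vertical segment collapses to a point, so the edge coincides geometrically with the horizontal-style drawing of the same edge. In particular, its crossings with tree edges agree in the two styles.

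The next step is to verify that in the network $N$ built by the MAX-CUT reduction, we can place the endpoints of each reticulation edge so that both endpoints share a common y-coordinate. By construction, every reticulation edge spans two tree edges whose y-ranges overlap (the heights $h(u_e)$ and $h(v_e)$ in each edge gadget, and the analogous setup in the vertex gadgets, guarantee this), so any y-value in the common intersection is a legal placement. With endpoint heights chosen this way, for every rotation $c$ of the tree vertices the snake-style and horizontal-style drawings of $N$ induce exactly the same crossings, and therefore the minimum over $c$ coincides. The chain of equivalences from the previous proof then transfers directly: $G$ has a cut of size at least $p$ if and only if $N$ admits a snake-style drawing with at most $k_2 + \abs{E} - p$ crossings.

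The only thing left to double-check is that this equal-height placement is compatible with the snake-style rule that the vertical segment's x-coordinate lie strictly between the x-coordinates of the two subtrees of $\ell(u,v)$. Since this rule only restricts an open x-interval (which remains nonempty even when the vertical segment collapses to a point), no obstacle arises. I expect the bulk of the write-up to consist of this routine verification rather than any new combinatorial argument, which is why the statement can be phrased as a corollary.
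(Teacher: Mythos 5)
Your proposal is correct and follows exactly the paper's argument: the paper also observes that a snake-style drawing whose reticulation-edge endpoints share a height degenerates into a horizontal-style drawing, so the MAX-CUT reduction (whose construction already places endpoints at equal heights) transfers verbatim. Your additional verifications (NP membership, overlapping y-ranges, the collapsing vertical segment) are just an expanded write-up of that same one-line observation.
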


\section{Snake-Style Drawings: FPT Algorithm} 
For the ear style, we have seen that whether a reticulation edge and a tree
edge cross, depends on the rotation of at most one tree vertex,
since horizontal line segments always go to the right.
This is not the case for horizontal-style and snake-style drawings.  
However, fixing the rotation of $\ell(u, v)$ for each reticulation edge $(u, v)$, also fixes
for the horizontal line segments of $(u, v)$ whether they go to the left or right. 
Further, while the vertical line segment may have a single crossing, this
crossing occurs if and only if one endpoint of the reticulation edge is the
lca of both endpoints. 
We can again conclude that the existence of each crossing of a horizontal line
segment with a tree edge depends on the rotation of a single tree vertex --
with two differences to the ear style: (i) A horizontal line segment can now also go towards the
left. (ii) A horizontal line segment of a reticulation edge $(u,v)$ ends
between the two subtrees of $\ell(u,v)$, i.e., one of the two subtrees can have
crossings with only one of the horizontal line segments of $(u,v)$.
With these observations we can now devise a fixed-parameter tractable algorithm. 

\begin{theorem} \label{clm:fpt}
Let $N$ be a tree-based network with $n$ leaves and $k$ reticulation edges.
Then a snake-style drawing of $N$ with minimum number of crossings can
be computed in $\Oh(2^{k}\cdot nk)$ time. The computation is thus
fixed-parameter tractable when parametrized by $k$.
\end{theorem}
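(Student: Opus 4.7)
The plan is to combine brute-force enumeration over the $2^k$ possible rotations of the lca-vertices $\ell(u,v)$ with a snake-style analogue of the sweep algorithm from \cref{clm:ear}. Concretely, for each of the $k$ reticulation edges $(u,v)$ we guess whether $\ell(u,v)$ is rotated left or right; this yields at most $2^k$ guesses (fewer if several reticulation edges share the same $\ell$-vertex, but $2^k$ is a safe upper bound). Given such a guess, the direction (left or right) of each horizontal segment of every reticulation edge is determined, and the vertical segment contributes at most one crossing whose existence depends on whether one endpoint is the lca of the two endpoints, which is also determined by the guess.

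For a fixed guess, I would then argue, as in the paragraph preceding the theorem, that the existence of every remaining crossing between a horizontal segment of a reticulation edge $(u,v)$ and a tree edge depends on the rotation of a single tree vertex~$w$ strictly between $v$ (or $u$) and $\ell(u,v)$. The only new wrinkle compared to the ear case is that (i)~a horizontal segment may go leftwards, and (ii)~because the vertical segment sits between the two subtrees of $\ell(u,v)$, each of these two subtrees can be hit by at most one of the two horizontal segments of $(u,v)$. Both facts are already captured by the fixed guess, so the counter update is symmetric to the ear-style one: for each endpoint $v$ of a reticulation edge, traverse the path from $v$ up to (but excluding) $\ell(u,v)$, and at each tree vertex $w$ on this path add the width (at height of $v$) of the appropriate subtree of $w$ to the counter corresponding to the rotation of $w$ that would place $v$ on the side the horizontal segment comes from.

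After processing all $2k$ endpoints, each non-lca tree vertex $w$ has two counters; choosing the smaller one minimizes the crossings involving horizontal segments under the current guess, while the vertical-segment and $\ell$-vertex contributions are determined by the guess itself. Adding these up gives the number of crossings of the optimal snake-style drawing consistent with the guess, and the minimum over all $2^k$ guesses is the global optimum. As in the ear-style proof, lca's can be precomputed in $\Oh(n+k)$ time~\cite{GT85}, and per guess the sweep runs in $\Oh(nk)$ time (the $\Oh(n)$ factor coming from computing subtree widths at each relevant height), giving the claimed $\Oh(2^k \cdot nk)$ bound.

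The main obstacle I expect is the bookkeeping in step two: verifying that fixing the rotations of all $\ell(u,v)$-vertices really does decouple every remaining potential crossing to a single tree-vertex decision, and correctly handling the asymmetry that one subtree of $\ell(u,v)$ only interacts with the horizontal segment at $u$ while the other only interacts with the segment at $v$. Once this is formalized, the rest follows the same template as \cref{clm:ear}.
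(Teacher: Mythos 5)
Your proposal is correct and follows essentially the same route as the paper: fix the rotations of the $\ell(u,v)$-vertices by brute force over at most $2^k$ choices, adapt the ear-style sweep (distinguishing left- and right-going horizontal segments) to choose optimal rotations for all remaining tree vertices, and take the minimum over all guesses, giving $\Oh(2^k \cdot nk)$ time. The paper's proof is exactly this argument, stated even more tersely.
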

\begin{proof}
  Let $L = \{\ell(u,v)\mid (u,v)$ is a reticulation edge$\}$.  
  Suppose the rotation for all $v \in L$ is fixed.
  With the observation above, we can slightly adapt our algorithm from \cref{clm:ear} 
  to compute for every $v \not\in L$ the rotation that induces less crossings.
  Namely, the algorithm has to differentiate whether line segments go to the left or right,
  and pick a rotation only for $v \not \in L$.  

  We try this for all possible combinations of rotations of vertices in $L$ 
  and then pick the drawing with the least crossings. 
  Since there are $\Oh(2^{k})$ such combinations, the statement on the running-time follows.
\end{proof}

Note that this implies the same statement for the horizontal style.

\pdfbookmark[1]{References}{References}  

\end{document}